\newcommand{\lyxmathsym}[1]{\ifmmode\begingroup\def\b@ld{bold}
  \text{\ifx\math@version\b@ld\bfseries\fi#1}\endgroup\else#1\fi}
\theoremstyle{plain}
	  \newtheorem{prop}{Proposition}
  \newenvironment{proof}[1][\proofname]{\par
    \normalfont\topsep6\p@\@plus6\p@\relax
    \trivlist
    \itemindent\parindent
    \item[\hskip\labelsep
          \scshape
      #1]\ignorespaces
  }{%
    \endtrivlist\@endpefalse
  }
  \providecommand{\proofname}{Proof}
\renewcommand{\maketag@@@}[1]{\hbox{\m@th\normalsize\normalfont#1}}
\newcommand{\newac}{\newacronym}
\newcommand{\ac}{\gls}
\newcommand{\Ac}{\Gls}
\newcommand{\acpl}{\glspl}
\begin{document}
\title{Optimal Transmit Beamforming for Secrecy Integrated Sensing and Communication}
\author{\IEEEauthorblockN{Zixiang Ren$^{1,2}$, Ling Qiu$^{1}$, and Jie Xu$^{2}$}\IEEEauthorblockA{$^{1}$Key Laboratory of Wireless-Optical Communications, Chinese
Academy of Sciences,}\IEEEauthorblockA{School of Information Science and Technology, University of Science
and Technology of China}\IEEEauthorblockA{$^{2}$School of Science and Engineering and Future Network of Intelligence
Institute (FNii),} The Chinese University of Hong Kong, Shenzhen \IEEEauthorblockA{E-mail: rzx66@mail.ustc.edu.cn, lqiu@ustc.edu.cn, xujie@cuhk.edu.cn}}
\maketitle
\begin{abstract}
This paper studies a secrecy integrated sensing and communication
(ISAC) system, in which a multi-antenna base station (BS) aims to
send confidential messages to a single-antenna communication user
(CU), and at the same time sense several targets that may be suspicious
eavesdroppers. To ensure the sensing quality while preventing the
eavesdropping, we consider that the BS sends dedicated sensing signals
(in addition to confidential information signals) that play a dual
role of artificial noise (AN) for confusing the eavesdropping targets.
Under this setup, we jointly optimize the transmit information and
sensing beamforming at the BS, to minimize the matching error between
the transmit beampattern and a desired beampattern for sensing, subject
to the minimum secrecy rate requirement at the CU and the transmit
power constraint at the BS. Although the formulated problem is non-convex,
we propose an algorithm to obtain the globally optimal solution by
using the semidefinite relaxation (SDR) together with a one-dimensional
(1D) search. Next, to avoid the high complexity induced by the 1D
search, we also present two sub-optimal solutions based on zero-forcing
and separate beamforming designs, respectively. Numerical results
show that the proposed designs properly adjust the information and
sensing beams to balance the tradeoffs among communicating with CU,
sensing targets, and confusing eavesdroppers, thus achieving desirable
transmit beampattern for sensing while ensuring the CU's secrecy rate.
\end{abstract}

\begin{IEEEkeywords}
Integrated sensing and communication (ISAC), physical layer security,
transmit beamforming, secrecy rate, optimization.
\end{IEEEkeywords}

\section{{Introduction}}

\Ac{isac} has been recognized as one of the potential key technologies
towards future B5G and 6G wireless networks \cite{LiuMasJ20,RahLusJ20,liu2021integrated},
in which sensing is integrated as a new functionality to enable emerging
environment-aware applications such as auto-driving, industrial automation,
and \ac{uavs} \cite{wUxUj21}. As compared to conventional wireless
networks with communication only,  \ac{isac} enables the dual use
of wireless infrastructures and scarce spectrum/power resources for
both communication and sensing, thus leading to enhanced system performance
at reduced cost.

The emergence of \ac{isac} introduces new data security issue for
wireless networks, especially when information-bearing signals are
reused for the purpose of sensing \cite{LiuMasJ18,LiuHuangNirJ20,hua2021optimal}.
For example, to ensure the performance of both sensing and communication,
the ISAC transmitters (e.g., cellular \acpl{bs}) need to focus  their
transmission (e.g., via transmit beamforming optimization \cite{LiuMasJ18,LiuHuangNirJ20,hua2021optimal})
towards both \acpl{cu} and targets. This, however, results in a severe
information leakage issue, as the sensing targets can be untrusted
and may eavesdrop the broadcast information signals \cite{wei2021multifunctional}. 

To deal with this issue, the physical layer security (see, e.g., \cite{GopPraJ08,WuKhiJ18})
has emerged as a viable solution to achieve perfectly secure information
transmission in \ac{isac}, by exploiting the difference between the
legitimate communication channel and wiretap channels of eavesdropping
targets \cite{WuKhiJ18}. In physical layer security, the secrecy
rate is normally adopted as the performance metric, which is defined
as the maximum transmission rate at which eavesdroppers are unable
to decode any information \cite{GopPraJ08}. Notice that the physical
layer security is much easier to be implemented in \ac{isac} systems
than that in conventional secrecy communication systems. This is due
to the fact that the \ac{isac} transmitter can exploit the integrated
sensing functionality to obtain the location and \ac{csi} of eavesdropping
targets, thus resolving the eavesdroppers' \ac{csi} issue faced in
conventional secrecy communication systems. 

To enhance the secrecy rate in \ac{isac} systems while ensuring the
sensing requirements, \ac{an} has been widely employed to not only
confuse the untrusted targets' eavesdropping, but also play a dual
role of sensing these targets \cite{wei2021multifunctional}. For
instance, the authors in \cite{DelAnaJ18} studied the secrecy \ac{isac}
system with one \ac{cu} and one eavesdropping target, in which the
transmit covariance matrices of both information signals and \ac{an}
were jointly optimized to maximize the \ac{cu}'s secrecy rate while
ensuring the received \ac{sinr} for sensing. The authors in \cite{SuLiuChrC19,SuLiuChrJ21}
further considered the scenario with multiple \acpl{cu} and one eavesdropping
target, in which the \ac{snr} at the eavesdropping target was minimized
while ensuring the \ac{cu}s' individual \ac{sinr} constraints and
certain sensing beampattern requirements. Due to the non-convexity
of the formulated problems, only sub-optimal beamforming/precoding
designs were obtained in these prior works \cite{SuLiuChrC19,SuLiuChrJ21,DelAnaJ18}
under their respective setups. 

Different from prior works, this paper studies a new secrecy \ac{isac}
scenario with one multi-antenna \ac{bs}, one single-antenna \ac{cu},
and multiple sensing targets, in which a portion of targets are untrusted
suspicious eavesdroppers. Under this setup, we adopt the matching
error between the exactly achieved transmit beampattern and a desired
beampattern (with energy towards sensing targets) as the performance
metric for sensing, and directly use the secrecy rate as the performance
metric for secrecy communication. Our objective is to minimize the
beampattern matching error for sensing by jointly optimizing the transmit
information and sensing beamforming, subject to the minimum secrecy
rate requirement at the \ac{cu} and the transmit power constraint
at the BS. Although the formulated problem is non-convex in general,
we obtain its globally optimal solution by using the technique of
\ac{sdr} together with a \ac{1d} search, in which the tightness
of \ac{sdr} is rigorously proved. Next, to avoid the high complexity
induced by the \ac{1d} search, we further develop two sub-optimal
solutions based on \ac{zf} and separate beamforming designs, respectively.
Finally, numerical results show that the proposed designs properly
adjust the information and sensing beams to balance various tradeoffs
among communicating with \ac{cu}, sensing targets, and confusing
eavesdroppers, thus achieving desirable transmit beampattern for sensing
while ensuring the \ac{cu}'s secrecy rate. 

\textit{Notations: }Vectors and matrices are denoted by bold lower-
and upper-case letters, respectively. $\boldsymbol{I}$ and $\boldsymbol{0}$
represent an identity matrix and an all-zero vector/matrix with appropriate
dimensions, respectively. For a square matrix $\boldsymbol{A}$, $\textrm{Tr}(\boldsymbol{A})$
denotes its trace, and $\boldsymbol{A}\succeq\boldsymbol{0}$ means
that $\boldsymbol{A}$ is positive semi-definite. For an arbitrary-size
matrix $\boldsymbol{B}$, $\textrm{rank}(\boldsymbol{B})$, $\boldsymbol{B}^{T}$,
and $\boldsymbol{B}^{H}$ denote its rank, transpose, and conjugate
transpose, respectively. $\mathbb{E}(\cdot)$ denotes the stochastic
expectation, and $\|\cdot\|$ denotes the Euclidean norm of a vector.
$\mathcal{CN}(\boldsymbol{x},\boldsymbol{Y})$ denotes a circularly
symmetric complex Gaussian (CSCG) random vector with mean vector $\boldsymbol{x}$
and covariance matrix $\boldsymbol{Y}$. $(x)^{+}\triangleq\max(x,0)$.

\section{System Model }

We consider a secrecy ISAC system, which consists of a BS equipped
with a \ac{ula} with $N>1$ antenna elements, a \ac{cu} with one
single antenna, and $K$ sensing targets. Let $\mathcal{K}=\{1,...,K\}$
denote the set of targets, among which the first $K_{E}$ ones (with
$K_{E}\le K$) are assumed to be untrusted or suspicious eavesdroppers,
denoted by set $\mathcal{K}_{E}=\{1,...,K_{E}\}\subseteq\mathcal{K}$.
We consider that the BS uses the linear transmit beamforming to send
the confidential message $s_{0}$ to the \ac{cu}, where $s_{0}$
is a CSCG random variable with zero mean and unit variance, i.e.,
$s_{0}\sim\mathcal{CN}(0,1)$, and $\boldsymbol{w}_{0}\in\mathbb{C}^{N\times1}$
denotes the corresponding transmit information beamforming vector.
Besides the information signal $s_{0}$, the BS also sends dedicated
sensing signal or equivalently \ac{an} $\boldsymbol{s}_{1}\in\mathbb{C}^{N\times1}$
to facilitate target sensing and confuse eavesdropping targets at
the same time. Here, $\boldsymbol{s}_{1}$ is an independent CSCG
random vector with zero mean and covariance matrix $\boldsymbol{S}=\mathbb{E}(\boldsymbol{s}_{1}\boldsymbol{s}_{1}^{H})\succeq\boldsymbol{0},\textrm{i.e}.,$
$\boldsymbol{s}_{1}\!\sim\!\mathcal{CN}(\boldsymbol{0},\boldsymbol{S})$.
Note that $\boldsymbol{S}$ is assumed to be of general rank, i.e.,
$0\le m=\textrm{rank}(\boldsymbol{S})\le N$. This corresponds to
the case with $m$ sensing beams, each of which can be obtained via
the \ac{evd} of $\boldsymbol{S}$. As a result, the transmitted signal
by the \ac{bs} is expressed as\setlength{\abovedisplayskip}{3pt} \setlength{\belowdisplayskip}{3pt}
\begin{equation}
\boldsymbol{x}=\boldsymbol{w}_{0}s_{0}+\boldsymbol{s}_{1}.
\end{equation}
Suppose that the \ac{bs} is subject to a total transmit power budget
$Q$. We thus have\setlength{\abovedisplayskip}{3pt} \setlength{\belowdisplayskip}{3pt}
\begin{equation}
\mathbb{E}(\|\boldsymbol{x}\|^{2})=\textrm{Tr}(\boldsymbol{S})+\|\boldsymbol{w}_{0}\|^{2}=Q.\label{eq:power constraint}
\end{equation}

We consider a quasi-static channel model, in which the wireless channels
remain unchanged over the time block of our interest, but may change
from one block to another. Let $\boldsymbol{g}\in\mathbb{C}^{N\times1}$
denote the channel vector from the BS to the  \ac{cu}. Accordingly,
the received signal at the \ac{cu} is expressed as\setlength{\abovedisplayskip}{3pt} \setlength{\belowdisplayskip}{3pt}
\begin{equation}
y=\boldsymbol{g}^{H}\boldsymbol{w}_{0}s_{0}+\boldsymbol{g}^{H}\boldsymbol{s}_{1}+z,
\end{equation}
where $z\sim\mathcal{CN}(0,\sigma_{0}^{2})$ denotes the additive
white Gaussian noise (AWGN) at the \ac{cu} receiver. The received
\ac{sinr} at the \ac{cu} is\setlength{\abovedisplayskip}{3pt} \setlength{\belowdisplayskip}{3pt}
\begin{equation}
\gamma_{0}(\boldsymbol{w}_{0},\boldsymbol{S})=\frac{\big|\boldsymbol{g}^{H}\boldsymbol{w}_{0}\big|^{2}}{\boldsymbol{g}^{H}\boldsymbol{S}\boldsymbol{g}+\sigma_{0}^{2}}.
\end{equation}

We consider the \ac{los} channel from the \ac{bs} to both trusted
and untrusted targets, similarly as in prior works \cite{SuLiuChrC19,SuLiuChrJ21}.
Let $\theta_{k}$ denote the \ac{aod} from the BS to target $k$.
Accordingly, the steering vector with angle $\theta_{k}$ is given
as\setlength{\abovedisplayskip}{3pt} \setlength{\belowdisplayskip}{3pt}
\begin{equation}
\begin{array}{c}
\boldsymbol{a}(\theta_{k})=[1,e^{j2\pi\frac{d}{\lambda}\sin(\theta_{k})},\ldots,e^{j2\pi(N-1)\frac{d}{\lambda}\sin(\theta_{k})}]^{T},\end{array}
\end{equation}
where $\lambda$ denotes the wavelength and $d$ denotes the spacing
between two adjacent antennas. With the \ac{los} consideration, the
channel vector from the BS to target $k$ is denoted by $\boldsymbol{h}_{k}=\alpha_{k}\boldsymbol{a}(\theta_{k})$.
Here, $\alpha_{k}$ denotes the channel amplitude that is given by
$\alpha_{k}=\sqrt{\frac{\Theta_{k}}{D_{k}^{2}}}$, where $\Theta_{k}$
denotes the reference path loss at a distance of $1$ meter, and $D_{k}$
denotes the distance between the \ac{bs} and target $k$ in meters.
The received signal at untrusted target $k\in\mathcal{K}_{E}$ is
denoted as\setlength{\abovedisplayskip}{3pt} \setlength{\belowdisplayskip}{3pt}
\begin{equation}
y_{k}=\boldsymbol{h}_{k}^{H}\boldsymbol{w}_{0}s_{0}+\boldsymbol{h}_{k}^{H}\boldsymbol{s}_{1}+z_{k},
\end{equation}
where $z_{k}\sim\mathcal{CN}(0,\sigma_{k}^{2})$ denotes the AWGN
at the receiver of untrusted target $k$. Accordingly, the received
\ac{sinr} at untrusted target $k$ is\setlength{\abovedisplayskip}{3pt} \setlength{\belowdisplayskip}{3pt}
\begin{equation}
\gamma_{k}(\boldsymbol{w}_{0},\boldsymbol{S})=\frac{\big|\boldsymbol{h}_{k}^{H}\boldsymbol{w}_{0}\big|^{2}}{\boldsymbol{h}_{k}^{H}\boldsymbol{S}\boldsymbol{h}_{k}+\sigma_{k}^{2}}.
\end{equation}

It is assumed that the BS perfectly knows the \ac{csi} $\boldsymbol{g}$
of the \ac{cu} via channel estimation and feedback, and knows the
\ac{csi} $\boldsymbol{h}_{k}$'s of untrusted targets via efficient
 sensing. In this case, the achievable secrecy rate at the \ac{cu}
(in bits-per-second-per-Hertz, bps/Hz) is given by \cite{GopPraJ08}\setlength{\abovedisplayskip}{3pt} \setlength{\belowdisplayskip}{3pt}
\begin{equation}
r(\boldsymbol{w}_{0},\boldsymbol{S})\negthickspace=\negthickspace\underset{k\in\mathcal{K}_{E}}{\min}\negthickspace\big(\log_{2}(1+\gamma_{0}(\boldsymbol{w}_{0},\boldsymbol{S}))-\log_{2}(1+\gamma_{k}(\boldsymbol{w}_{0},\boldsymbol{S}))\big)^{+}.\label{eq:rate}
\end{equation}

Next, we consider the targets sensing, for which the transmit beampattern
gain is used as the performance metric. For any sensing angle $\theta\in[-\frac{\pi}{2},\frac{\pi}{2}]$,
the beampattern gain $P(\theta)$ is defined as the transmit signal
power distribution at $\theta$, i.e.,\setlength{\abovedisplayskip}{3pt} \setlength{\belowdisplayskip}{3pt}
\begin{eqnarray}
P(\theta) & = & \mathbb{E}\left(\big|\boldsymbol{a}^{H}(\theta)(\boldsymbol{s}_{1}+\boldsymbol{w}_{0}s_{0})\big|^{2}\right)\nonumber \\
 & = & \boldsymbol{a}^{H}(\theta)(\boldsymbol{S}+\boldsymbol{w}_{0}\boldsymbol{w}_{0}^{H})\boldsymbol{a}(\theta).
\end{eqnarray}
In particular, with the roughly known targets' locations $\ensuremath{\theta_{k}}$'s,
we define the desired beampattern $\hat{P}(\theta)$ \cite{LiuHuangNirJ20,StoPETLiJ07}
as\setlength{\abovedisplayskip}{3pt} \setlength{\belowdisplayskip}{3pt}
\begin{equation}
\hat{P}(\theta)=\begin{cases}
1 & \exists k\in\mathcal{K},|\theta-\theta_{k}|<\frac{\Delta\theta}{2},\\
0 & \textrm{otherwise},
\end{cases}
\end{equation}
where $\Delta\theta$ denotes the width of beampattern angle. Accordingly,
we use the beampattern matching error as the sensing performance metric,
which measures the difference between the actual transmit beampattern
in angular domain versus the desired beampattern, i.e.,\setlength{\abovedisplayskip}{3pt} \setlength{\belowdisplayskip}{3pt}
\begin{equation}
\begin{array}[b]{l}
B(\{\bar{\theta}_{m}\}_{m=1}^{M},\boldsymbol{w}_{0},\boldsymbol{S},\eta)=\\
\stackrel[m=1]{M}{\sum}\big|\eta\hat{P}(\bar{\theta}_{m})-\boldsymbol{a}^{H}(\bar{\theta}_{m})(\boldsymbol{S}+\boldsymbol{w}_{0}\boldsymbol{w}_{0}^{H})\boldsymbol{a}(\bar{\theta}_{m})\big|^{2},
\end{array}\label{eq:bme}
\end{equation}
where $\{\bar{\theta}_{m}\}_{m=1}^{M}$ denote the $M$ sample angles
over $[-\frac{\pi}{2},\frac{\pi}{2}]$ and $\eta$ is a scaling factor
to be determined. Notice that a larger value of $M$ may lead to more
accurate beampattern matching, but at the cost of increased computation
complexity.

Our objective is to minimize the beampattern matching error in (\ref{eq:bme})
by jointly optimizing the sensing and information beamforming design
($\boldsymbol{w}_{0}$ and $\boldsymbol{S}$), subject to the minimum
secrecy rate constraint of $R_{0}$ and the transmit power constraint
in (\ref{eq:power constraint}). The secrecy rate constrained sensing
beampattern matching problem is thus formulated as\begin{subequations}
\begin{eqnarray}
\textrm{(P1)}: & \!\!\!\underset{\boldsymbol{w}_{0},\boldsymbol{S},\eta}{\min}\!\!\! & \stackrel[m=1]{M}{\sum}\negthickspace\big|\eta\hat{P}(\bar{\theta}_{m})\negmedspace-\negmedspace\boldsymbol{a}^{H}(\bar{\theta}_{m})(\boldsymbol{S}\negmedspace+\negmedspace\boldsymbol{w}_{0}\boldsymbol{w}_{0}^{H})\boldsymbol{a}(\bar{\theta}_{m})\big|^{2}\nonumber \\
 & \textrm{s.t.} & r(\boldsymbol{w}_{0},\boldsymbol{S})\geq R_{0},\label{eq:12}\\
 &  & \textrm{Tr}(\boldsymbol{S})+\boldsymbol{w}_{0}\boldsymbol{w}_{0}^{H}=Q,\boldsymbol{S}\succeq\boldsymbol{0}.\label{eq:13}
\end{eqnarray}
\end{subequations}Notice that problem (P1) is non-convex due to the
non-convex objective function and the non-concave secrecy rate in
(\ref{eq:12}). We will solve problem (P1) optimally in Section III.

Before proceeding, we check the feasibility of problem (P1), which
is equivalent to solving the following secrecy rate maximization problem
(P2).
\begin{eqnarray*}
\textrm{(P2)}: & \!\!\!\underset{\boldsymbol{w}_{0},\boldsymbol{S}}{\max} & r(\boldsymbol{w}_{0},\boldsymbol{S})\\
 & \textrm{s.t.} & \textrm{Tr}(\boldsymbol{S})+\boldsymbol{w}_{0}\boldsymbol{w}_{0}^{H}=Q,\boldsymbol{S}\succeq\boldsymbol{0}.
\end{eqnarray*}
Notice that problem (P2) has been optimally solved in \cite{ChuJ15}.
Let $R^{*}$ denote the optimal objective value or the maximum secrecy
rate achieved in (P2). If $R^{*}\ge R_{0}$, then problem (P1) is
feasible; otherwise, it is infeasible. Therefore, the feasibility
of (P1) has been checked. In the sequel, we focus on the case when
problem (P1) is feasible.\vspace{-4em}

\section{Optimal Joint Beamforming Solution to Problem (P1)\vspace{-3em}}

This section presents the globally optimal joint beamforming solution
to problem (P1) by using the technique of \ac{sdr} together with
a \ac{1d} search. Towards this end, we introduce $\boldsymbol{W}=\boldsymbol{w}_{0}\boldsymbol{w}_{0}^{H}$,
where $\boldsymbol{W}\succeq\boldsymbol{0}$ and $\textrm{rank}(\boldsymbol{W})\le1$.
Accordingly, problem (P1) is equivalently reformulated as \begin{subequations}
\begin{eqnarray}
\textrm{(P3)}: & \!\!\!\underset{\boldsymbol{W},\boldsymbol{S},\eta}{\min}\!\!\! & \negthickspace\negthickspace\stackrel[m=1]{M}{\sum}\negthickspace\big|\eta\hat{P}(\bar{\theta}_{m})\negmedspace-\negmedspace\boldsymbol{a}^{H}(\bar{\theta}_{m})(\boldsymbol{S}\negmedspace+\negmedspace\boldsymbol{W})\boldsymbol{a}(\bar{\theta}_{m})\big|^{2}\nonumber \\
 & \textrm{s.t.} & \negthickspace\negthickspace\log_{2}\big(1\negthickspace+\negthickspace\frac{\boldsymbol{g}^{H}\boldsymbol{W}\boldsymbol{g}}{\boldsymbol{g}^{H}\boldsymbol{S}\boldsymbol{g}+\sigma_{0}^{2}}\big)\nonumber \\
 &  & \negthickspace\negthickspace-\negthickspace\log\big(1\negthickspace+\negthickspace\frac{\boldsymbol{h}_{k}^{H}\boldsymbol{W}\boldsymbol{h}_{k}}{\boldsymbol{h}_{k}^{H}\boldsymbol{S}\boldsymbol{h}_{k}+\sigma_{k}^{2}}\big)\negthickspace\geq\negthickspace R_{0},\negthickspace\forall k\negthickspace\in\negthickspace\mathcal{K}_{E},\label{eq:13a}\\
 &  & \negthickspace\negthickspace\textrm{Tr}(\boldsymbol{S})+\textrm{Tr}(\boldsymbol{W})=Q,\label{eq:1a}\\
 &  & \negthickspace\negthickspace\textrm{rank}(\boldsymbol{W})\leq1,\label{eq:1b}\\
 &  & \negthickspace\negthickspace\boldsymbol{S}\succeq0,\boldsymbol{W}\succeq0.\label{eq:1c}
\end{eqnarray}
\end{subequations}

Next, we further introduce an auxiliary optimization variable $\gamma_{E}>0$,
which denotes the maximum \ac{sinr} at the $K_{E}$ eavesdropping
targets. As a result, problem (P3) is further reformulated as\begin{subequations}
\begin{eqnarray}
(\textrm{P4}):\!\!\! & \negthickspace\underset{\boldsymbol{W},\boldsymbol{S},\eta,\gamma_{E}}{\min}\negthickspace & \negthickspace\negthickspace\negthickspace\stackrel[m=1]{M}{\sum}\negthickspace\big|\eta\hat{P}(\bar{\theta}_{m})\negmedspace-\negmedspace\boldsymbol{a}^{H}(\bar{\theta}_{m})(\boldsymbol{S}\negmedspace+\negmedspace\boldsymbol{W})\boldsymbol{a}(\bar{\theta}_{m})\big|^{2}\nonumber \\
 & \textrm{s.t.} & \negthickspace\negthickspace\negthickspace\negthickspace\boldsymbol{h}_{k}^{H}\boldsymbol{W}\boldsymbol{h}_{k}\negthickspace\leq\negthickspace\gamma_{E}(\boldsymbol{h}_{k}^{H}\boldsymbol{S}\boldsymbol{h}_{k}\negthickspace+\negthickspace\sigma_{k}^{2}),\negthickspace\forall k\in\mathcal{K}_{E},\label{eq:1.1a}\\
 &  & \negthickspace\negthickspace\negthickspace\negthickspace\boldsymbol{g}^{H}\boldsymbol{W}\boldsymbol{g}\geq\beta(\boldsymbol{g}^{H}\boldsymbol{S}\boldsymbol{g}+\sigma_{0}^{2}),\label{eq:1.1b}\\
 &  & \negthickspace\negthickspace\negthickspace\negthickspace\textrm{\textrm{(\ref{eq:1a})}, \textrm{(\ref{eq:1b}), and \textrm{(\ref{eq:1c}),}}}\nonumber 
\end{eqnarray}
\end{subequations} where the two constraints in (\ref{eq:1.1a})
and (\ref{eq:1.1b}) are equivalent to (\ref{eq:13a}), and $\beta=2^{R_{0}}(1+\gamma_{E})-1.$
Under any given $\gamma_{E}>0$, the optimization of $\boldsymbol{W},\boldsymbol{S},\textrm{and }\eta$
in problem (P4) becomes 
\begin{eqnarray}
(\textrm{P4.1}):\!\!\! & \negthickspace\underset{\boldsymbol{W},\boldsymbol{S},\eta}{\min}\negthickspace & \negthickspace\negthickspace\negthickspace\stackrel[m=1]{M}{\sum}\negthickspace\big|\eta\hat{P}(\bar{\theta}_{m})\negmedspace-\negmedspace\boldsymbol{a}^{H}(\bar{\theta}_{m})(\boldsymbol{S}\negmedspace+\negmedspace\boldsymbol{W})\boldsymbol{a}(\bar{\theta}_{m})\big|^{2}\nonumber \\
 & \textrm{s.t.} & \negthickspace\negthickspace\negthickspace\negthickspace\textrm{\textrm{(\ref{eq:1a})}, \textrm{(\ref{eq:1b}), \textrm{(\ref{eq:1c}),} (\ref{eq:1.1a}), and (\ref{eq:1.1b})}}.
\end{eqnarray}
Let $f(\gamma_{E})$ denote the optimal objective value achieved by
problem $(\textrm{P4.1})$ with given $\gamma_{E}$. Accordingly,
we solve problem (P4) by first solving problem (P4.1) under any given
$\gamma_{E}>0$ and then search over $\gamma_{E}$ via a \ac{1d}
search in problem (P4.2).\setlength{\abovedisplayskip}{3pt} \setlength{\belowdisplayskip}{3pt}
\begin{equation}
\begin{array}{ccc}
(\textrm{P4.2}): & \underset{\gamma_{E}>0}{\min} & f(\gamma_{E})\end{array}.
\end{equation}

In the following, we only need to focus on solving problem (P4.1).
By using the \ac{sdr} technique, we relax the rank constraint in
(\ref{eq:1b}) and obtain the \ac{sdr} version of (P4.1) as 
\begin{eqnarray}
(\textrm{SDR4.1}): & \!\!\!\underset{\boldsymbol{W},\boldsymbol{S},\eta}{\min} & \stackrel[m=1]{M}{\sum}\!\!\big|\eta\hat{P}(\bar{\theta}_{m})\!\!-\!\!\boldsymbol{a}^{H}(\bar{\theta}_{m})(\boldsymbol{S}\!\!+\!\!\boldsymbol{W})\boldsymbol{a}(\bar{\theta}_{m})\big|^{2}\nonumber \\
 & \textrm{s.t.} & \textrm{\textrm{(\ref{eq:1a})}, \textrm{(\ref{eq:1c}), (\ref{eq:1.1a}), and (\ref{eq:1.1b}).}}
\end{eqnarray}
It is observed that problem $(\textrm{SDR4.1)}$ is a convex \ac{qsdp}
problem that can be optimally solved by convex solvers, such as CVX
\cite{cvx}. Let $\tilde{\boldsymbol{W}}^{*}$, $\tilde{\boldsymbol{S}}^{*}$,
$\textrm{ and }\tilde{\eta}^{*}$ denote the obtained optimal solution
to problem (SDR4.1), where $\textrm{rank}(\tilde{\boldsymbol{W}}^{*})\leq1$
may not hold in general. In this case, we need additional steps to
construct rank-one solutions to problem (P4.1). In the following proposition,
we show that the SDR is tight and accordingly construct an optimal
rank-one solution to problem (SDR4.1) and thus (P4.1).
\begin{prop}
\textup{Based on the obtained optimal solution $\tilde{\boldsymbol{W}}^{*}$,
$\tilde{\boldsymbol{S}}^{*}$, $\textrm{ and }\tilde{\eta}^{*}$ to
problem (SDR4.1) with $\textrm{rank}(\tilde{\boldsymbol{W}}^{*})>1$
in general, we can always construct an equivalent solution of $\boldsymbol{W}^{*}$,
$\boldsymbol{S}^{*}$, and $\eta^{*}$ in the following, which is
optimal for problems (SDR4.1) and (P4.1) with $\textrm{rank}(\boldsymbol{W}^{*})=1$.\begin{subequations}
\begin{eqnarray}
 &  & \boldsymbol{w}_{0}^{*}=(\boldsymbol{g}^{H}\tilde{\boldsymbol{W}}^{*}\boldsymbol{g})^{-\frac{1}{2}}\tilde{\boldsymbol{W}}^{*}\boldsymbol{g},\label{eq:c1}\\
 &  & \boldsymbol{W}^{*}=\boldsymbol{w}_{0}^{*}\boldsymbol{w}_{0}^{*H}=\frac{\tilde{\boldsymbol{W}}^{*}\boldsymbol{g}\boldsymbol{g}^{H}\tilde{\boldsymbol{W}}^{*}}{\boldsymbol{g}^{H}\tilde{\boldsymbol{W}}^{*}\boldsymbol{g}},\label{eq:c2}\\
 &  & \boldsymbol{S}^{*}=\tilde{\boldsymbol{W}}^{*}+\tilde{\boldsymbol{S}}^{*}-\boldsymbol{W}^{*},\label{eq:c3}\\
 &  & \eta^{*}=\tilde{\eta}^{*}.\label{eq:c4}
\end{eqnarray}
\end{subequations}}
\end{prop}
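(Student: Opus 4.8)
The plan is to verify directly that the tuple $(\boldsymbol{W}^*,\boldsymbol{S}^*,\eta^*)$ defined in (\ref{eq:c1})--(\ref{eq:c4}) is feasible for the restricted problem (P4.1) with $\textrm{rank}(\boldsymbol{W}^*)=1$ and attains exactly the optimal value of the relaxed problem (SDR4.1). Since every feasible point of (P4.1) is also feasible for (SDR4.1), this forces the two optimal values to coincide --- so the SDR is tight --- and makes $(\boldsymbol{W}^*,\boldsymbol{S}^*,\eta^*)$ optimal for both problems.

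First I would settle well-definedness and record two structural identities. Constraint (\ref{eq:1.1b}) forces $\boldsymbol{g}^H\tilde{\boldsymbol{W}}^*\boldsymbol{g}\ge\beta(\boldsymbol{g}^H\tilde{\boldsymbol{S}}^*\boldsymbol{g}+\sigma_0^2)\ge\beta\sigma_0^2>0$, because $\beta=2^{R_0}(1+\gamma_E)-1>0$ for $R_0\ge0$ and $\gamma_E>0$; hence $\boldsymbol{w}_0^*$ and $\boldsymbol{W}^*$ in (\ref{eq:c1})--(\ref{eq:c2}) are well defined, $\boldsymbol{W}^*=\boldsymbol{w}_0^*\boldsymbol{w}_0^{*H}\succeq\boldsymbol{0}$, and $\boldsymbol{W}^*\neq\boldsymbol{0}$ so $\textrm{rank}(\boldsymbol{W}^*)=1$. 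Definition (\ref{eq:c3}) yields the identity $\boldsymbol{S}^*+\boldsymbol{W}^*=\tilde{\boldsymbol{S}}^*+\tilde{\boldsymbol{W}}^*$, and a one-line computation from (\ref{eq:c2}), using that $\tilde{\boldsymbol{W}}^*$ is Hermitian, gives $\boldsymbol{g}^H\boldsymbol{W}^*\boldsymbol{g}=\boldsymbol{g}^H\tilde{\boldsymbol{W}}^*\boldsymbol{g}$ and therefore $\boldsymbol{g}^H\boldsymbol{S}^*\boldsymbol{g}=\boldsymbol{g}^H\tilde{\boldsymbol{S}}^*\boldsymbol{g}$. The first identity immediately shows that the objective --- which depends on $(\boldsymbol{W},\boldsymbol{S})$ only through $\boldsymbol{S}+\boldsymbol{W}$, and on $\eta^*=\tilde{\eta}^*$ --- is unchanged, and that the power equality (\ref{eq:1a}) is preserved since $\textrm{Tr}(\boldsymbol{S}^*)+\textrm{Tr}(\boldsymbol{W}^*)=\textrm{Tr}(\tilde{\boldsymbol{S}}^*)+\textrm{Tr}(\tilde{\boldsymbol{W}}^*)=Q$; the second identity shows the CU constraint (\ref{eq:1.1b}) continues to hold for the new tuple with exactly the same slack.

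It then remains to check $\boldsymbol{S}^*\succeq\boldsymbol{0}$ and the eavesdropper constraints (\ref{eq:1.1a}), and both rest on one elementary fact: for any $\boldsymbol{A}\succeq\boldsymbol{0}$ and any vector $\boldsymbol{v}$, Cauchy--Schwarz applied to $\boldsymbol{A}^{1/2}\boldsymbol{v}$ and $\boldsymbol{A}^{1/2}\boldsymbol{g}$ (with $\boldsymbol{g}^H\boldsymbol{A}\boldsymbol{g}>0$) gives $\boldsymbol{v}^H\boldsymbol{A}\boldsymbol{v}\ge|\boldsymbol{v}^H\boldsymbol{A}\boldsymbol{g}|^2/(\boldsymbol{g}^H\boldsymbol{A}\boldsymbol{g})$, i.e., $\boldsymbol{A}\succeq\boldsymbol{A}\boldsymbol{g}\boldsymbol{g}^H\boldsymbol{A}/(\boldsymbol{g}^H\boldsymbol{A}\boldsymbol{g})$. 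Applied with $\boldsymbol{A}=\tilde{\boldsymbol{W}}^*$ this gives $\tilde{\boldsymbol{W}}^*-\boldsymbol{W}^*\succeq\boldsymbol{0}$, so $\boldsymbol{S}^*=\tilde{\boldsymbol{S}}^*+(\tilde{\boldsymbol{W}}^*-\boldsymbol{W}^*)\succeq\boldsymbol{0}$; and for each $k\in\mathcal{K}_E$ it gives $\boldsymbol{h}_k^H\boldsymbol{W}^*\boldsymbol{h}_k=|\boldsymbol{h}_k^H\tilde{\boldsymbol{W}}^*\boldsymbol{g}|^2/(\boldsymbol{g}^H\tilde{\boldsymbol{W}}^*\boldsymbol{g})\le\boldsymbol{h}_k^H\tilde{\boldsymbol{W}}^*\boldsymbol{h}_k$, so the left-hand side of (\ref{eq:1.1a}) does not increase, while $\boldsymbol{h}_k^H\boldsymbol{S}^*\boldsymbol{h}_k=\boldsymbol{h}_k^H\tilde{\boldsymbol{S}}^*\boldsymbol{h}_k+(\boldsymbol{h}_k^H\tilde{\boldsymbol{W}}^*\boldsymbol{h}_k-\boldsymbol{h}_k^H\boldsymbol{W}^*\boldsymbol{h}_k)\ge\boldsymbol{h}_k^H\tilde{\boldsymbol{S}}^*\boldsymbol{h}_k$, so the right-hand side does not decrease; chaining these with the feasibility of $(\tilde{\boldsymbol{W}}^*,\tilde{\boldsymbol{S}}^*)$ yields $\boldsymbol{h}_k^H\boldsymbol{W}^*\boldsymbol{h}_k\le\boldsymbol{h}_k^H\tilde{\boldsymbol{W}}^*\boldsymbol{h}_k\le\gamma_E(\boldsymbol{h}_k^H\tilde{\boldsymbol{S}}^*\boldsymbol{h}_k+\sigma_k^2)\le\gamma_E(\boldsymbol{h}_k^H\boldsymbol{S}^*\boldsymbol{h}_k+\sigma_k^2)$.

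With all constraints verified and the objective value preserved, $(\boldsymbol{W}^*,\boldsymbol{S}^*,\eta^*)$ is optimal for (SDR4.1), and since it is rank-one it is also feasible and hence optimal for (P4.1), completing the argument. I expect the only genuinely delicate point --- the ``hard part'' --- to be handling the positive semidefiniteness of $\boldsymbol{S}^*$ and the eavesdropper constraints simultaneously through the single matrix inequality $\boldsymbol{A}\succeq\boldsymbol{A}\boldsymbol{g}\boldsymbol{g}^H\boldsymbol{A}/(\boldsymbol{g}^H\boldsymbol{A}\boldsymbol{g})$ and being careful about its direction; everything else is bookkeeping built on the identity $\boldsymbol{S}^*+\boldsymbol{W}^*=\tilde{\boldsymbol{S}}^*+\tilde{\boldsymbol{W}}^*$.
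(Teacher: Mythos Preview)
Your proof is correct and follows essentially the same route as the paper's Appendix~A: both exploit the identity $\boldsymbol{S}^*+\boldsymbol{W}^*=\tilde{\boldsymbol{S}}^*+\tilde{\boldsymbol{W}}^*$ to preserve the objective and power constraint, use Cauchy--Schwarz (you via $\boldsymbol{A}^{1/2}$, the paper via a factorization $\tilde{\boldsymbol{W}}^*=\bar{\boldsymbol{W}}\bar{\boldsymbol{W}}^H$) to obtain $\tilde{\boldsymbol{W}}^*\succeq\boldsymbol{W}^*$, and then verify (\ref{eq:1c}), (\ref{eq:1.1a}), (\ref{eq:1.1b}) by the same chain of inequalities. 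Your explicit check that $\boldsymbol{g}^H\tilde{\boldsymbol{W}}^*\boldsymbol{g}>0$ is a nice addition the paper omits.
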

\begin{proof}
See Appendix A.
\end{proof}
Suppose that the optimal solution of $\gamma_{E}$ in problem (P4.2)
is given by $\gamma_{E}^{*}$. In this case, by solving problem (P4.1)/(SDR4.1)
with given $\gamma_{E}^{*}$, the correspondingly constructed optimal
solution of $\boldsymbol{W}^{*}$, $\boldsymbol{S}^{*}$, and $\eta^{*}$
in Proposition 1 becomes the optimal solution to problems (P4) and
(P3). Accordingly, the corresponding $\boldsymbol{w}_{0}^{*}$ from
(\ref{eq:c1}), $\boldsymbol{S}^{*}$, and $\eta^{*}$ are the obtained
optimal solution to problem (P1). 

\section{Suboptimal Beamforming Solutions to Problem (P1)}

The preceding section proposed the globally optimal solution to problem
(P1), which, however, suffers from high computational complexity due
to the \ac{1d} search. To overcome this issue, this section presents
two low-complexity designs to obtain high-quality sub-optimal solutions
based on \ac{zf} and separate beamforming, respectively. 

\subsection{ZF-Based Beamforming Design}

First, we consider the \ac{zf}-based beamforming design, in which
the information beamforming vector $\boldsymbol{w}_{0}$ is enforced
to lie in the null space of all eavesdroppers' channel vectors, i.e.$,\boldsymbol{h}_{k}^{H}\boldsymbol{w}_{0}=0,\forall k\in\mathcal{K}_{E}$,
such that the eavesdroppers are not able to receive any confidential
information from the BS. Notice that the \ac{zf}-based beamforming
design only works when the number of transmit antennas $N$ at the
BS is greater than the number of eavesdroppers $K_{E}$, i.e., $N>K_{E}$. 

Let $\boldsymbol{H}=[\boldsymbol{h}_{1},\boldsymbol{h}_{2},\ldots,\boldsymbol{h}_{K_{E}}]^{H}$
denote the channel matrix from the BS to all the eavesdroppers, of
which the singular value decomposition (SVD) is
\begin{equation}
\boldsymbol{H}=\boldsymbol{\boldsymbol{U}\Lambda}\boldsymbol{V}^{H}=\boldsymbol{\boldsymbol{U}\Lambda}[\boldsymbol{V}_{1}\boldsymbol{V}_{2}]^{H},
\end{equation}
where $\boldsymbol{\Lambda}\in C^{K_{E}\times N}$ has non-zero diagonal
elements that correspond the singular values of $\boldsymbol{H}$,
$\boldsymbol{U}\in\mathbb{C}^{K_{E}\times K_{E}}$ and $\boldsymbol{V}\in\mathbb{C}^{N\times N}$
are both unitary matrices, and $\boldsymbol{V}_{1}\in\mathbb{C}^{N\times K_{E}}$
and $\boldsymbol{V}_{2}\in\mathbb{C}^{N\times(N-K_{E})}$ consist
of the first $K_{E}$ and the last $N-K_{E}$ right singular vectors
of $\boldsymbol{H}$, respectively. In order to ensure $\boldsymbol{h}_{k}^{H}\boldsymbol{w}_{0}=0,\forall k\in\mathcal{K}_{E}$,
we set 
\begin{equation}
\setlength{\abovedisplayskip}{3pt}\setlength{\belowdisplayskip}{3pt}\boldsymbol{w}_{0}=\boldsymbol{V}_{2}\bar{\boldsymbol{w}}_{0}
\end{equation}
without loss of generality, where $\bar{\boldsymbol{w}}_{0}\in\mathbb{C}^{(N-K_{E})\times1}$
to be optimized. In this case, the secrecy rate becomes 
\begin{equation}
\bar{r}(\bar{\boldsymbol{w}}_{0},\boldsymbol{S})\negthickspace=\log_{2}\big(1+\frac{\boldsymbol{g}^{H}\boldsymbol{V}_{2}\bar{\boldsymbol{w}}_{0}\bar{\boldsymbol{w}}_{0}^{H}\boldsymbol{V}_{2}^{H}\boldsymbol{g}}{\sigma_{0}^{2}+\boldsymbol{g}^{H}\boldsymbol{S}\boldsymbol{g}}\big).
\end{equation}
To maximize the secrecy rate, the beamformer $\bar{\boldsymbol{w}}_{0}$
can be set as $\bar{\boldsymbol{w}}_{0}=\sqrt{Q_{0}}\frac{\boldsymbol{V}_{2}^{H}\boldsymbol{g}}{\|\boldsymbol{V}_{2}^{H}\boldsymbol{g}\|}$,
and accordingly we have\setlength{\abovedisplayskip}{3pt}\setlength{\belowdisplayskip}{3pt}

\begin{equation}
\boldsymbol{w}_{0}=\sqrt{Q_{0}}\boldsymbol{V}_{2}\frac{\boldsymbol{V}_{2}^{H}\boldsymbol{g}}{\|\boldsymbol{V}_{2}^{H}\boldsymbol{g}\|},\label{eq:30}
\end{equation}
where $Q_{0}\ge0$ is the optimizable transmit power for information
signals. Hence, by substituting the ZF beamforming design of $\boldsymbol{w}_{0}$
in (\ref{eq:30}) in problem (P1), the secrecy rate constrained sensing
beampattern matching problem is recast into\setlength{\abovedisplayskip}{3pt}\setlength{\belowdisplayskip}{3pt}$ $
\begin{eqnarray}
\negthickspace\negthickspace\textrm{(P5)}: & \negthickspace\negthickspace\underset{Q_{0},\boldsymbol{S},\eta}{\min} & \negthickspace\negthickspace\stackrel[m=1]{M}{\sum}\negthickspace\big|\eta\hat{P}(\bar{\theta}_{m})\negthickspace-\negthickspace\boldsymbol{a}^{H}(\bar{\theta}_{m})(\boldsymbol{S}\negthickspace+\negthickspace Q_{0}\tilde{\boldsymbol{w}}_{0}\tilde{\boldsymbol{w}}_{0}^{H})\boldsymbol{a}(\bar{\theta}_{m})\big|^{2}\nonumber \\
 & \textrm{s.t.} & \negthickspace\negthickspace Q_{0}\|\boldsymbol{V}_{2}^{H}\boldsymbol{g}\|^{2}\geq(\sigma_{0}^{2}+\boldsymbol{g}^{H}\boldsymbol{S}\boldsymbol{g})(2^{R_{0}}-1),\nonumber \\
 &  & \negthickspace\negthickspace\textrm{Tr}(\boldsymbol{S})+Q_{0}=Q,\boldsymbol{S}\succeq0,
\end{eqnarray}
where $\tilde{\boldsymbol{w}}_{0}=\boldsymbol{V}_{2}\frac{\boldsymbol{V}_{2}^{H}\boldsymbol{g}}{\|\boldsymbol{V}_{2}^{H}\boldsymbol{g}\|}$.
Problem $\textrm{(P5)}$ is also a convex \ac{qsdp} problem that
can be optimally solved by CVX. Let $\bar{Q}_{0}^{*}$, $\bar{\boldsymbol{S}}^{*},$
and $\bar{\eta}^{*}$ denote the obtained optimal solution to problem
(P5). Accordingly, based on (\ref{eq:30}) the ZF-based beamforming
solution is obtained as\setlength{\abovedisplayskip}{3pt}\setlength{\belowdisplayskip}{3pt}
\begin{eqnarray}
\boldsymbol{w}_{0}=\sqrt{\bar{Q}_{0}^{*}}\boldsymbol{V}_{2}\frac{\boldsymbol{V}_{2}^{H}\boldsymbol{g}}{\|\boldsymbol{V}_{2}^{H}\boldsymbol{g}\|},\boldsymbol{S}=\bar{\boldsymbol{S}}^{*},\eta=\bar{\eta}^{*}.
\end{eqnarray}

\subsection{Separate Beamforming Design}

In this subsection, we propose another sub-optimal beamforming design,
in which the information beamforming vector $\boldsymbol{w}_{0}$
and the sensing covariance matrix $\boldsymbol{S}$ are designed separately.
In this approach, we first design the information beamforming vector
$\boldsymbol{w}_{0}$ to achieve the secrecy rate with the minimum
required power, and then use the remaining power for sensing signals
$\boldsymbol{S}$. 

First, we design the information beamforming $\boldsymbol{w}_{0}$.
Without sensing signals or \ac{an}, the secrecy rate is given as\setlength{\abovedisplayskip}{3pt} \setlength{\belowdisplayskip}{3pt}
\begin{equation}
\tilde{r}(\boldsymbol{w}_{0})=\underset{k\in\mathcal{K}_{E}}{\min}\negthickspace\big(\log_{2}(1+\bar{\gamma}_{0}(\boldsymbol{w}_{0}))-\log_{2}(1+\bar{\gamma}_{k}(\boldsymbol{w}_{0}))\big)^{+},
\end{equation}
where $\bar{\gamma}_{0}(\boldsymbol{w}_{0})=\frac{\big|\boldsymbol{g}^{H}\boldsymbol{w}_{0}\big|^{2}}{\sigma_{0}^{2}}$
denotes the \ac{snr} at the CU and $\bar{\gamma}_{k}(\boldsymbol{w}_{0})=\frac{\big|\boldsymbol{h}_{k}^{H}\boldsymbol{w}_{0}\big|^{2}}{\sigma_{k}^{2}}$
denotes the \ac{snr} at eavesdropping target $k\in\mathcal{K}_{E}$.
In this case, the information beamforming vector $\boldsymbol{w}_{0}$
is designed by minimizing the transmit power $\|\boldsymbol{w}_{0}\|^{2}$
while ensuring the minimum secrecy rate of $R_{0}$, i.e., 
\begin{eqnarray}
\textrm{(P6)}: & \negthickspace\underset{\boldsymbol{w}_{0}}{\min}\negthickspace & \|\boldsymbol{w}_{0}\|^{2}\nonumber \\
 & \textrm{s.t.} & \tilde{r}(\boldsymbol{w}_{0})\geq R_{0},\nonumber \\
 &  & \|\boldsymbol{w}_{0}\|^{2}\leq Q.
\end{eqnarray}
It is observed that problem (P6) is a typical multiple-input single-output
(MISO) secrecy communication problem and has been optimally solved
in \cite{ChuJ15}. Let $\boldsymbol{w}_{0}^{\star}$ denote the obtained
optimal solution beamforming to problem (P6).

Next, we design the sensing covariance matrix $\boldsymbol{S}$ without
interfering with the confidential information signals reception at
the \ac{cu}, i.e., $\boldsymbol{g}^{H}\boldsymbol{S}\boldsymbol{g}=0$.
Without loss of generality, we define $\boldsymbol{Q}_{2}=\boldsymbol{I}-\boldsymbol{g}\boldsymbol{g}^{H}/\|\boldsymbol{g}\|^{2}$,
and accordingly set 
\begin{equation}
\boldsymbol{S}=\boldsymbol{Q}_{2}\bar{\boldsymbol{S}}\boldsymbol{Q}_{2}^{H},\label{eq:35}
\end{equation}
where $\bar{\boldsymbol{S}}\in\mathbb{C}^{N\times N}$ and $\bar{\boldsymbol{S}}\succeq0$.
By substituting $\boldsymbol{w}_{0}^{\star}$ and $\boldsymbol{S}$
in (\ref{eq:35}) to problem (P1), we have the following beampattern
matching error minimization problem.
\begin{eqnarray}
\textrm{(P7)}\!:\negthickspace\negthickspace & \negthickspace\negthickspace\underset{\bar{\boldsymbol{S}},\eta}{\min}\negthickspace & \negthickspace\negthickspace\stackrel[m=1]{M}{\sum}\!\!\!\big|\eta\hat{P}(\bar{\theta}_{m})\!\!-\!\!\boldsymbol{a}^{H}(\bar{\theta}_{m})(\boldsymbol{Q}_{2}\bar{\boldsymbol{S}}\boldsymbol{Q}_{2}^{H}\!\thinspace\!+\!\!\boldsymbol{w}_{0}^{\star}\boldsymbol{w}_{0}^{\star H})\boldsymbol{a}(\bar{\theta}_{m})\big|^{2}\nonumber \\
 & \textrm{s.t.} & \negthickspace\negthickspace\bar{\boldsymbol{S}}\succeq0,\nonumber \\
 &  & \negthickspace\negthickspace\textrm{Tr}(\boldsymbol{Q}_{2}\bar{\boldsymbol{S}}\boldsymbol{Q}_{2}^{H})+\|\boldsymbol{w}_{0}^{\star}\|^{2}=Q.
\end{eqnarray}
Problem (P7) is also a \ac{qsdp} problem that can be optimally solved
by CVX. Let $\bar{\boldsymbol{S}}^{\star}$ denote the optimal solution
to problem (P7). Then $\boldsymbol{S}^{\star}=\boldsymbol{Q}_{2}\bar{\boldsymbol{S}}^{\star}\boldsymbol{Q}_{2}^{H}$
and $\boldsymbol{w}_{0}^{\star}$ become the obtained separate beamforming
designs. 

It is worth remarking that the two sub-optimal beamforming designs
in Sections IV-A and IV-B do not require the computational-heavy \ac{1d}
search in the optimal solution, thus significantly reducing the computational
complexity. 

\section{Numerical Results}

This section provides numerical results to validate the performance
of our proposed joint information and sensing beamforming designs
for secrecy \ac{isac}. In the simulation, the BS is deployed with
$N=8$ antenna elements, and there are $K=8$ targets located at angles
$-10\lyxmathsym{\textdegree},10\lyxmathsym{\textdegree},-30\lyxmathsym{\textdegree},30\lyxmathsym{\textdegree},80\lyxmathsym{\textdegree},-80\lyxmathsym{\textdegree},-50\lyxmathsym{\textdegree},\textrm{and }50\lyxmathsym{\textdegree}$.
The two targets located at $-30\lyxmathsym{\textdegree}\textrm{ and }30\lyxmathsym{\textdegree}$
are assumed to be untrusted eavesdroppers. We set the noise powers
at the \ac{cu} and all the eavesdroppers are identical to be $-60\textrm{dBm},$
i.e., $\sigma_{0}^{2}=\sigma_{k}^{2}=-60\textrm{dBm},\forall k\in\mathcal{K}_{E}$,
and the path loss from the BS to the CU and the eavesdroppers to be
$-70\textrm{dB}$. Furthermore, the transmit power budget at the BS
is set to be $Q=20\textrm{dBm}$. We choose $M=201$ angles for $\{\bar{\theta}_{m}\}$,
which are uniformly sampled in $[-\frac{\pi}{2},\frac{\pi}{2}]$.
We also set the beampattern width as $\Delta\theta=5\lyxmathsym{\textdegree}$.
The normalized spacing between two adjacent antennas is set as $\frac{d}{\lambda}=0.5.$
The LoS model is considered for the channel from the BS to the CU.
\vspace{-0.3cm} 
\begin{figure}[tbh]
\includegraphics[scale=0.5]{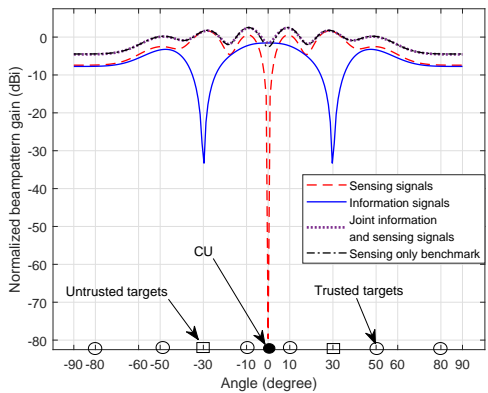}\centering

\caption{\label{fig:5}The normalized beampattern gains achieved by the proposed
optimal beamforming design, where the CU is located at $\theta_{0}=0\lyxmathsym{\protect\textdegree}$,
and the secrecy rate threshold is $R_{0}=3\textrm{bps/Hz}$. Here,
the beampattern gain by the information signals is defined as $\boldsymbol{a}^{H}(\theta)\boldsymbol{w}_{0}^{*}\boldsymbol{w}_{0}^{*H}\boldsymbol{a}(\theta)$,
and that by the sensing signals is defined as $\boldsymbol{a}^{H}(\theta)\boldsymbol{S}^{*}\boldsymbol{a}(\theta)$.}
\end{figure}
\begin{figure}[tbh]
\includegraphics[scale=0.5]{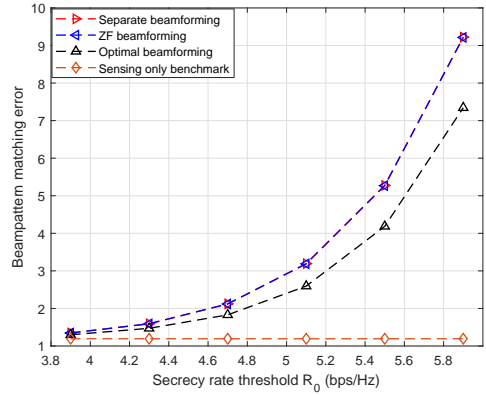}\centering\caption{\label{fig:3}The beampattern matching error versus the secrecy rate
threshold $R_{0}$ under different designs.}
\end{figure}
\begin{figure}[tbh]
\includegraphics[scale=0.45]{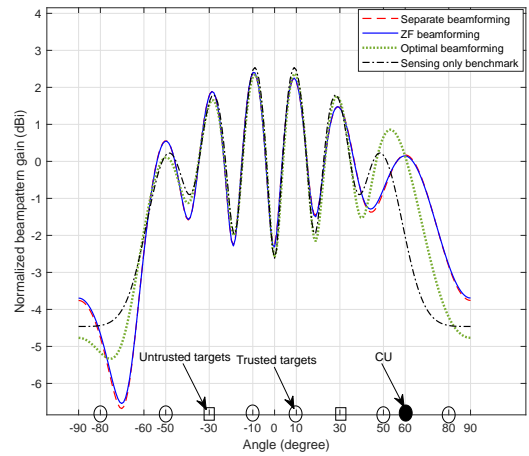}\centering

\caption{\label{fig:4}The normalized beampattern gains under different designs,
where the CU is located at $\theta_{0}=60\lyxmathsym{\protect\textdegree}$,
and the secrecy rate threshold is $R_{0}=3\textrm{bps/Hz}$.}
\end{figure}

Fig. \ref{fig:5} shows the normalized beampattern gains achieved
by the proposed optimal beamforming design solution, where the \ac{cu}
is located at $\theta_{0}=0{^\circ}$ and the secrecy rate threshold
is set as $R_{0}=3\textrm{bps/Hz}$. For illustration, we also show
the beampattern gain solely achieved by information signals (i.e.,
$\boldsymbol{a}^{H}(\theta)\boldsymbol{w}_{0}^{*}\boldsymbol{w}_{0}^{*H}\boldsymbol{a}(\theta)$)
and that solely achieved by sensing signals (i.e., $\boldsymbol{a}^{H}(\theta)\boldsymbol{S}^{*}\boldsymbol{a}(\theta)$).
For comparison, we also consider the sensing only benchmark, which
corresponds to the design in problem (P1) with $R_{0}=0$ and $\boldsymbol{w}_{0}=\boldsymbol{0}$.
It is observed that the beampattern gain achieved by the information
signals is around $30\textrm{dB}$ lower at angles of $-30\lyxmathsym{\textdegree}$
and $30\lyxmathsym{\textdegree}$. This is for the purpose of preventing
the information leakage as the eavesdropping targets are located at
these two angles. It is also observed that the beampattern gain achieved
by the sensing signals is around $70\textrm{dB}$ lower at $0\lyxmathsym{\textdegree}$.
This is in order to not interfere with the \ac{cu} at that angle.
By combining them, the joint information and sensing signals are observed
to achieve similar beampattern gains as those by the sensing only
benchmark. This shows the effectiveness of the proposed optimal solution. 

Fig. 2 shows the beampattern matching error versus the secrecy rate
threshold $R_{0}$ under different designs, where the \ac{cu} is
located at $\theta_{0}=60\lyxmathsym{\textdegree}$. It is observed
that as $R_{0}$ becomes large, the achieved beampattern matching
errors for the three proposed designs increases, and the performance
gap between the optimal solution versus the sub-optimal ones becomes
more significant. It is also observed that as $R_{0}$ becomes small,
such performance gap becomes small, and all the three designs approach
the beampattern gain error lower bound by the sensing only benchmark.

Fig. \ref{fig:4} shows the normalized beampattern gains under different
designs, with $R_{0}=3.5\textrm{bps/Hz}$ and $\theta_{0}=60\lyxmathsym{\textdegree}$.
It is observed that the beampattern gains by the two sub-optimal designs
well match those by the optimal solution and sensing only benchmark
in most sensing angles. This validates the effectiveness of the proposed
designs.\addtolength{\topmargin}{0.1in}

\section{Conclusion}

This paper studied the joint transmit information and sensing beamforming
design for secrecy ISAC system with one CU and multiple untrusted
and trusted targets. Our objective was to minimize the sensing beampattern
matching error, while ensuring the secrecy rate requirement. We proposed
the globally optimal solution to the highly non-convex problem, via
the \ac{sdr} and \ac{1d} search, in which the tightness of \ac{sdr}
is proved. We also proposed two sub-optimal solutions based on ZF
and separate beamforming, respectively. Numerical results showed that
at the proposed solutions, the information beams are designed towards
the CU and the trusted targets, while the sensing beams are designed
towards both untrusted and trusted targets, thus increasing the sensing
performance while ensuring the CU's secrecy rate. How to extend the
proposed designs to setups with imperfect CSI, multiple CUs, and/or
multuple BSs are interesting research directions worth future pursuing.

\appendices{}

\section{Proof of Proposition 1}

First, it is observed from (\ref{eq:c3}) that $\boldsymbol{S}^{*}+\boldsymbol{W}^{*}=\tilde{\boldsymbol{W}}^{*}+\tilde{\boldsymbol{S}}^{*}$.
As a result, $\boldsymbol{W}^{*}$, $\boldsymbol{S}^{*}$, and $\eta^{*}$
achieve the same objective value for problem (SDR4.1) as that achieved
by $\tilde{\boldsymbol{W}}^{*}$, $\tilde{\boldsymbol{S}}^{*}$, and
$\tilde{\eta}^{*}$, and $\boldsymbol{W}^{*}$, $\boldsymbol{S}^{*}$,
and $\eta^{*}$ satisfy the constraint in (\ref{eq:1a}). 

Next, based on $\tilde{\boldsymbol{W}}^{*}\succeq\boldsymbol{0}$,
we have $\tilde{\boldsymbol{W}}^{*}=\bar{\boldsymbol{W}}\bar{\boldsymbol{W}}^{H}$.
Based on this, for any $\boldsymbol{v}\in\mathbb{C}^{N\times1}$,
it follows that 
\begin{equation}
\begin{array}[b]{l}
\boldsymbol{v}^{H}(\tilde{\boldsymbol{W}}^{*}-\boldsymbol{W}^{*})\boldsymbol{v}=\boldsymbol{v}^{H}\tilde{\boldsymbol{W}^{*}}\boldsymbol{v}-\boldsymbol{v}^{H}\frac{\tilde{\boldsymbol{W}^{*}}\boldsymbol{g}\boldsymbol{g}^{H}\tilde{\boldsymbol{W}^{*}}}{\boldsymbol{g}^{H}\tilde{\boldsymbol{W}^{*}}\boldsymbol{g}}\boldsymbol{v}\\
=\frac{1}{\boldsymbol{g}^{H}\tilde{\boldsymbol{W}^{*}}\boldsymbol{g}}(\boldsymbol{v}^{H}\tilde{\boldsymbol{W}^{*}}\boldsymbol{v}\boldsymbol{g}^{H}\tilde{\boldsymbol{W}^{*}}\boldsymbol{g}-\boldsymbol{v}^{H}\tilde{\boldsymbol{W}^{*}}\boldsymbol{g}\boldsymbol{g}^{H}\tilde{\boldsymbol{W}^{*}}\boldsymbol{v})\\
=\frac{1}{\boldsymbol{g}^{H}\tilde{\boldsymbol{W}^{*}}\boldsymbol{g}}(\|\boldsymbol{a}\|^{2}\|\boldsymbol{b}\|^{2}-\big|\boldsymbol{a}^{H}\boldsymbol{b}\big|^{2})\overset{\textrm{(a)}}{\geq}0,
\end{array}
\end{equation}
where $\boldsymbol{a}=\bar{\boldsymbol{W}}^{H}\boldsymbol{v}\in\mathbb{C}^{N\times1},\boldsymbol{b}=\bar{\boldsymbol{W}}^{H}\boldsymbol{g}\in\mathbb{C}^{N\times1}$,
and inequality (a) holds because of the Cauchy-Schwartz inequality.
Accordingly, we have $\tilde{\boldsymbol{W}}^{*}-\boldsymbol{W}^{*}\succeq\boldsymbol{0}$.
By using this together with (\ref{eq:c3}), we have $\boldsymbol{S}^{*}\succeq\tilde{\boldsymbol{S}}^{*}\succeq\boldsymbol{0}$.
Therefore, constraint (\ref{eq:1c}) holds for $\boldsymbol{S}^{*}$
and $\boldsymbol{W}^{*}$. 

Furthermore, we prove that $\boldsymbol{W}^{*}$, $\boldsymbol{S}^{*}$,
and $\eta^{*}$ satisfy the constraints in (\ref{eq:1.1a}) and (\ref{eq:1.1b}).
On one hand, it is clear from (\ref{eq:c2}) and (\ref{eq:c3}) that
$\boldsymbol{g}^{H}\tilde{\boldsymbol{W}^{*}}\boldsymbol{g}=\boldsymbol{g}^{H}\boldsymbol{W}^{*}\boldsymbol{g}$
and $\boldsymbol{g}^{H}\tilde{\boldsymbol{S}}^{*}\boldsymbol{g}=\boldsymbol{g}^{H}\boldsymbol{S}{}^{*}\boldsymbol{g}$,
and therefore, constraint (\ref{eq:1.1b}) holds. On the other hand,
based on $\tilde{\boldsymbol{W}^{*}}-\boldsymbol{W}^{*}\succeq0$
and $\boldsymbol{S}^{*}\succeq\tilde{\boldsymbol{S}}^{*}\succeq\boldsymbol{0}$,
it follows that 

\begin{equation}
\begin{aligned}[b] & \boldsymbol{h}_{k}^{H}\boldsymbol{W}^{*}\boldsymbol{h}_{k}\leq\boldsymbol{h}_{k}^{H}\tilde{\boldsymbol{W}}^{*}\boldsymbol{h}_{k}\\
 & \leq\gamma_{E}(\boldsymbol{h}_{k}^{H}\tilde{\boldsymbol{S}}^{*}\boldsymbol{h}_{k}+\sigma_{k}^{2})\\
 & \leq\gamma_{E}(\boldsymbol{h}_{k}^{H}\boldsymbol{S}^{*}\boldsymbol{h}_{k}+\sigma_{k}^{2}),\forall k\in\mathcal{K}_{E}.
\end{aligned}
\label{eq:24}
\end{equation}
As a result, constraint (\ref{eq:1.1a}) is satisfied. 

By combining the results above, it is proved that $\boldsymbol{W}^{*}$,
$\boldsymbol{S}^{*}$, and $\eta^{*}$ are optimal for problem (SDR4.1).
Notice that $\textrm{rank}(\boldsymbol{W}^{*})=1$ with $\boldsymbol{W}^{*}=\boldsymbol{w}_{0}^{*}\boldsymbol{w}_{0}^{*H}$.
Therefore, $\boldsymbol{W}^{*}$, $\boldsymbol{S}^{*}$, and $\eta^{*}$
are also optimal for problem (P4.1). This thus completes the proof.

\appendices{}

\footnotesize\bibliographystyle{IEEEtran}
\bibliography{IEEEabrv,IEEEexample,my_ref}

\begin{thebibliography}{10}
\providecommand{\url}[1]{#1}
\csname url@samestyle\endcsname
\providecommand{\newblock}{\relax}
\providecommand{\bibinfo}[2]{#2}
\providecommand{\BIBentrySTDinterwordspacing}{\spaceskip=0pt\relax}
\providecommand{\BIBentryALTinterwordstretchfactor}{4}
\providecommand{\BIBentryALTinterwordspacing}{\spaceskip=\fontdimen2\font plus
\BIBentryALTinterwordstretchfactor\fontdimen3\font minus
  \fontdimen4\font\relax}
\providecommand{\BIBforeignlanguage}[2]{{%
\expandafter\ifx\csname l@#1\endcsname\relax
\typeout{** WARNING: IEEEtran.bst: No hyphenation pattern has been}%
\typeout{** loaded for the language `#1'. Using the pattern for}%
\typeout{** the default language instead.}%
\else
\language=\csname l@#1\endcsname
\fi
#2}}
\providecommand{\BIBdecl}{\relax}
\BIBdecl

\bibitem{LiuMasJ20}
F.~Liu, C.~Masouros, A.~P. Petropulu, H.~Griffiths, and L.~Hanzo, ``Joint radar
  and communication design: {Applications}, state-of-the-art, and the road
  ahead,'' \emph{{IEEE} Trans. Commun.}, vol.~68, no.~6, pp. 3834--3862, Feb.
  2020.

\bibitem{RahLusJ20}
M.~L. Rahman, J.~A. Zhang, X.~Huang, Y.~J. Guo, and R.~W. Heath, ``Framework
  for a perceptive mobile network using joint communication and radar
  sensing,'' \emph{{IEEE} Trans. Aerosp. Electron. Syst.}, vol.~56, no.~3, pp.
  1926--1941, Jun. 2020.

\bibitem{liu2021integrated}
\BIBentryALTinterwordspacing
F.~Liu, Y.~Cui, C.~Masouros, J.~Xu, T.~Xiao~Han, Y.~C. Eldar, and S.~Buzzi,
  ``Integrated sensing and communications: {Towards} dual-functional wireless
  networks for {6G} and beyond,'' 2021. [Online]. Available:
  \url{https://arxiv.org/abs/2108.07165}
\BIBentrySTDinterwordspacing

\bibitem{wUxUj21}
Q.~Wu, J.~Xu, Y.~Zeng, D.~W.~K. Ng, N.~Al-Dhahir, R.~Schober, and A.~L.
  Swindlehurst, ``A comprehensive overview on {5G-and-beyond} networks with
  {UAVs}: {From} communications to sensing and intelligence,'' \emph{{IEEE} J.
  Sel. Areas Commun.}, vol.~39, no.~10, pp. 2912--2945, Oct. 2021.

\bibitem{LiuMasJ18}
F.~Liu, C.~Masouros, A.~Li, H.~Sun, and L.~Hanzo, ``{MU-MIMO} communications
  with {MIMO} radar: {From} co-existence to joint transmission,'' \emph{{IEEE}
  Trans. Wireless Commun.}, vol.~17, no.~4, pp. 2755--2770, Apr. 2018.

\bibitem{LiuHuangNirJ20}
X.~Liu, T.~Huang, N.~Shlezinger, Y.~Liu, J.~Zhou, and Y.~C. Eldar, ``Joint
  transmit beamforming for multiuser {MIMO} communications and {MIMO} radar,''
  \emph{{IEEE} Trans. Signal Process.}, vol.~68, pp. 3929--3944, Jun. 2020.

\bibitem{hua2021optimal}
\BIBentryALTinterwordspacing
H.~Hua, J.~Xu, and T.~X. Han, ``Optimal transmit beamforming for integrated
  sensing and communication,'' 2021. [Online]. Available:
  \url{https://arxiv.org/abs/2104.11871v1}
\BIBentrySTDinterwordspacing

\bibitem{wei2021multifunctional}
\BIBentryALTinterwordspacing
Z.~Wei, F.~Liu, C.~Masouros, N.~Su, and A.~P.~Petropulu, ``Towards
  multi-functional {6G} wireless networks: {Integrating} sensing, communication
  and security,'' 2021. [Online]. Available:
  \url{https://arxiv.org/abs/2107.07735}
\BIBentrySTDinterwordspacing

\bibitem{GopPraJ08}
P.~K. Gopala, L.~Lai, and H.~El~Gamal, ``On the secrecy capacity of fading
  channels,'' \emph{{IEEE} Trans. Inf. Theory}, vol.~54, no.~10, pp.
  4687--4698, Sep. 2008.

\bibitem{WuKhiJ18}
Y.~Wu, A.~Khisti, C.~Xiao, G.~Caire, K.-K. Wong, and X.~Gao, ``A survey of
  physical layer security techniques for {5G} wireless networks and challenges
  ahead,'' \emph{{IEEE} J. Sel. Areas Commun.}, vol.~36, no.~4, pp. 679--695,
  Apr. 2018.

\bibitem{DelAnaJ18}
A.~Deligiannis, A.~Daniyan, S.~Lambotharan, and J.~A. Chambers, ``Secrecy rate
  optimizations for {MIMO} communication radar,'' \emph{{IEEE} Trans. Aerosp.
  Electron. Syst.}, vol.~54, no.~5, pp. 2481--2492, Oct. 2018.

\bibitem{SuLiuChrC19}
N.~Su, F.~Liu, and C.~Masouros, ``Enhancing the physical layer security of
  dual-functional radar communication systems,'' in \emph{Proc. IEEE Global
  Commun. Conf. (GLOBECOM)}, Dec. 2019, pp. 1--6.

\bibitem{SuLiuChrJ21}
------, ``Secure radar-communication systems with malicious targets:
  integrating radar, communications and jamming functionalities,'' \emph{{IEEE}
  Trans. Wireless Commun.}, vol.~20, no.~1, pp. 83--95, Jan. 2021.

\bibitem{StoPETLiJ07}
P.~Stoica, J.~Li, and Y.~Xie, ``On probing signal design for {MIMO} radar,''
  \emph{{IEEE} Trans. Signal Process.}, vol.~55, no.~8, pp. 4151--4161, Aug.
  2007.

\bibitem{ChuJ15}
Z.~Chu, H.~Xing, M.~Johnston, and S.~Le~Goff, ``Secrecy rate optimizations for
  a {MISO} secrecy channel with multiple multiantenna eavesdroppers,''
  \emph{{IEEE} Trans. Wireless Commun.}, vol.~15, no.~1, pp. 283--297, Jan.
  2016.

\bibitem{cvx}
\BIBentryALTinterwordspacing
M.~Grant and S.~Boyd, ``{CVX}: Matlab software for disciplined convex
  programming, version 2.1,'' Mar. 2014. [Online]. Available:
  \url{http://cvxr.com/cvx}
\BIBentrySTDinterwordspacing

\end{thebibliography}

\end{document}